\theoremstyle{plain}
\numberwithin{equation}{section}
\newtheorem{thm}{Theorem}[section]
\newtheorem{lem}[thm]{Lemma}
\newcommand{\positive}{{\mathbb N}}
\newcommand{\complex}{{\mathbb C}}
\newcommand{\real}{{\mathbb R}}
\newcommand{\ascript}{{\mathcal A}}
\newcommand{\cscript}{{\mathcal C}}
\newcommand{\dscript}{{\mathcal D}}
\newcommand{\pscript}{{\mathcal P}}
\newcommand{\qscript}{{\mathcal Q}}
\newcommand{\sscript}{{\mathcal S}}
\newcommand{\rmcyl}{\mathrm{cyl}}
\newcommand{\ahat}{\widehat{a}}
\newcommand{\xhat}{\widehat{x}}
\newcommand{\yhat}{\widehat{y}}
\newcommand{\zhat}{\widehat{z}}
\newcommand{\omegahat}{\widehat{\omega}}
\newcommand{\rhohat}{\widehat{\rho}}
\newcommand{\atilde}{\widetilde{a}}
\newcommand{\abar}{\overline{a}}
\newcommand{\ouparrow}{\mathord{\uparrow}}
\newcommand{\ab}[1]{\left|#1\right|}
\newcommand{\brac}[1]{\left\{#1\right\}}
\newcommand{\paren}[1]{\left(#1\right)}
\newcommand{\sqbrac}[1]{\left[#1\right]}
\newcommand{\elbows}[1]{{\left\langle#1\right\rangle}}
\newcommand{\ket}[1]{{\left|#1\right>}}
\newcommand{\bra}[1]{{\left<#1\right|}}
\begin{document}

\title{LABELED CAUSETS IN\\DISCRETE QUANTUM GRAVITY
}
\author{S. Gudder\\ Department of Mathematics\\
University of Denver\\ Denver, Colorado 80208, U.S.A.\\
sgudder@du.edu
}
\date{}
\maketitle

\begin{abstract}
We point out that labeled causets have a much simpler structure than unlabeled causets. For example, labeled causets can be uniquely specified by a sequence of integers. Moreover, each labeled causet processes a unique predecessor and hence has a unique history. Our main result shows that an arbitrary quantum sequential growth process (QSGP) on the set of labeled causets ``compresses'' in a natural way onto a QSGP on the set of unlabeled causets. The price we have to pay is that this procedure causes an ``explosion'' of values due to multiplicities. We also observe that this procedure is not reversible. This indicates that although many QSGPs on the set of unlabeled causets can be constructed using this method, not all can, so it is not completely general. We close by showing that a natural metric can be defined on labeled and unlabeled causets and on their paths.
\end{abstract}

\section{Introduction}  
In the causal set approach to discrete quantum gravity, a causal set (causet) represents a possible universe at a certain time instant and a possible ``completed'' universe is represented by a path of growing causets \cite{gud11,hen09,rs00,sor03,sur11}. Just as covariance dictates that the laws of physics are independent of the coordinate system employed, in the discrete theory, covariance implies that order isomorphic causets should be identified. That is, a causet should be independent of labeling. This is unfortunate because it is very convenient to work with labeled causets. At a fundamental level, the labeling specifies the ``birth order'' of the vertices of a causet. Although covariance dictates that a causet should be independent of the birth order of its vertices, this order is useful in keeping track of the causets. In particular, the labeled offspring of a labeled causet possess a natural lexicographic total order. This lexicographic order together with its level uniquely specify a labeled causet in terms of a sequence of positive integers.

Although there are many more labeled causets than unlabeled causets, their graph structure is much simpler. This is because the graph of labeled causets forms a tree which implies that each labeled causet has a unique producer (predecessor) and hence possesses a unique history. This unique history structure makes it much simpler to construct a candidate quantum dynamics to form a quantum sequential growth process (QSGP) which is the basis for this approach to discrete quantum gravity.

Even though the graphic structure $(\pscript ,\to )$ of unlabeled causets is quite complicated, the graphic structure $(\pscript ',\to )$ of labeled causets is simple. A QSGP consists of a sequence $\brac{\rho _n}$ of Hilbert space operators, and it is more straightforward to construct these operators for
$\pscript '$ then for $\pscript$. The main point of this paper is that even though the QSGP $\brac{\rho _n}$ for $\pscript '$ can be arbitrary and need not satisfy any covariance conditions, we can ``compress'' $\brac{\rho _n}$ to form a QSGP $\brac{\rhohat _n}$ on $\pscript$ that is automatically covariant. We thus obtain the surprising result that any quantum dynamics on $\pscript '$ compresses in a natural way to a quantum dynamics on
$\pscript$.

As one might suspect, there is a price to be paid for this fortunate circumstance. Since there are usually many ways to label an unlabeled causet, the compression map is many-to-one which results in a multiplicity factor. This factor increases with $n$ and may affect the convergence of quantum measures. In this way, events in $\pscript '$with finite quantum measure may have corresponding events in
$\pscript$ with infinite quantum measure. We also observe that this procedure is not reversible. That is, a QSGP on $\pscript$ may not be ``expandable'' to a QSGP on $\pscript '$. This indicates that although many QSGPs on $\pscript$ can be constructed using this method, not all can, so it is not completely general. In the last section of this article we show that a natural metric can be defined on labeled and unlabeled causets and on their paths.

\section{Quantum Sequential Growth Processes} 
A finite partially ordered set is called a \textit{causet}. In this section we treat only unlabeled causets and two isomorphic causets are considered to be identical. Let $\pscript _n$ be the collection of all causets of cardinality $n$, $n=1,2,\ldots$, and let $\pscript =\cup\pscript _n$ be the collection of all causets. If $a,b$ are elements of a causet $x$, we interpret the order $a<b$ as meaning that $b$ is in the causal future of $a$. If $a<b$ and there is no $c$ with $a<c<b$, then $a$ is a \textit{parent} of $b$ and $b$ is a \textit{child} of $a$. An element $a\in x$ for $x\in\pscript$ is \textit{maximal} if there is no $b\in x$ with $a<b$. If $x\in\pscript _n$, $y\in\pscript _{n+1}$ then $x$ \textit{produces} $y$ if $y$ is obtained from $x$ by adjoining a single maximal element $a$ to $x$. We then write $x\to y$ and
$y=x\uparrow a$. If $x\to y$, we say that $x$ is a \textit{producer} of $y$ and $y$ is an \textit{offspring} of $x$. Of course, $x$ may produce many offspring and a causet may be the offspring of many producers.

The transitive closure of $\to$ makes $\pscript$ into a partially ordered set itself and we call $(\pscript ,\to )$ the \textit{causet growth process} (CGP). A \textit{path} in $\pscript$ is a sequence (string) $\omega _1\omega _2\cdots$, where $\omega _i\in\pscript$ and
$\omega _i\to\omega _{i+1}$, $i=1,2,\ldots\,$. An $n$-\textit{path} in $\pscript$ is a finite string $\omega _1\omega _2\cdots\omega _n$ where, again $\omega _i\in\pscript _i$ and $\omega _i\to\omega _{i+1}$. We denote the set of paths by $\Omega$ and the set of $n$-paths by $\Omega _n$. If
$\omega =\omega _1\omega _2\cdots\omega _n\in\Omega _n$, we define $(\omega\to )\subseteq\Omega _{n+1}$ by
\begin{equation*}
(\omega\to )=\brac{\omega _1\omega _2\cdots\omega _n\omega _{n+1}\colon\omega _n\to\omega _{n+1}}
\end{equation*}
Thus, $(\omega\to )$ is the set of one-step continuations of $\omega$. If $A\subseteq\Omega _n$ we define
$(A\to )\subseteq\Omega _{n+1}$ by
\begin{equation*}
(A\to )=\cup\brac{(\omega\to )\colon\omega\in A}
\end{equation*}

The set of all paths beginning with $\omega\in\Omega _n$ is called an \textit{elementary cylinder set} and is denoted by $\rmcyl (\omega )$. If
$A\subseteq\Omega _n$, then the \textit{cylinder set} $\rmcyl (A)$ is defined by
\begin{equation*}
\rmcyl (A)=\cup\brac{\rmcyl (\omega )\colon\omega\in A}
\end{equation*}
Using the notation
\begin{equation*}
\cscript (\Omega _n)=\brac{\rmcyl (A)\colon A\subseteq\Omega _n}
\end{equation*}
we see that
\begin{equation*}
\cscript (\Omega _1)\subseteq \cscript (\Omega _2)\subseteq\cdots
\end{equation*}
is an increasing sequence of subalgebras of the \textit{cylinder algebra} $\cscript (\Omega )=\cup \cscript (\Omega _n)$. Letting $\ascript$ be the $\sigma$-algebra generated by $\cscript (\Omega )$, we have that $(\Omega ,\ascript )$ is a measurable space. For $A\subseteq\Omega$, we define the sets
$A^n\subseteq\Omega _n$ by
\begin{equation*}
A^n=\brac{\omega _1\omega _2\cdots\omega _n\colon\omega _1\omega _2\cdots\omega _n\omega _{n+1}\cdots\in A}
\end{equation*}
That is, $A^n$ is the set of $n$-paths that can be continued to a path in $A$. We think of $A^n$ as the $n$-step approximation to $A$. We have that
\begin{equation*}
\rmcyl (A^1)\subseteq\rmcyl (A^2)\subseteq\cdots\subseteq A
\end{equation*}
so that $A\subseteq\cap\rmcyl (A^n)$ but $A\ne\cap\rmcyl (A^n)$ in general, even if $A\in\ascript$.

Let $H_n=L_2(\Omega _n)$ be the $n$-\textit{path Hilbert space} $\complex ^{\Omega _n}$ with the usual inner product
\begin{equation*}
\elbows{f,g}=\sum\brac{\overline{f(\omega )}g(\omega )\colon\omega\in\Omega _n}
\end{equation*}
For $A\subseteq\Omega _n$ the characteristic function $\chi _A\in H_n$ with $\|\chi _A\|=\sqrt{\ab{A}\,}$ where $\ab{A}$ denotes the cardinality of
$A$. In particular, $1_n=\chi _{\Omega _n}$ satisfies $\|1_n\|=\sqrt{\ab{\Omega _n}\,}$. A positive operator $\rho$ on $H_n$ that satisfies $\elbows{\rho 1_n,1_n}=1$ is called a \textit{probability operator}. Corresponding to a probability operator $\rho$ we define the
\textit{decoherence functional} $D_\rho\colon 2^{\Omega _n}\times 2^{\Omega _n}\to\complex$ by
\begin{equation*}
D_\rho (A,B)=\elbows{\rho\chi _B,\chi _A}
\end{equation*}
We interpret $D_\rho (A,B)$ as a measure of the interference between the events $A$, $B$ when the system is described by $\rho$. We also define the $q$--\textit{measure} $\mu _\rho\colon 2^{\Omega _n}\to\real ^+$ by $\mu _\rho (A)=D_\rho (A,A)$ and interpret $\mu _\rho (A)$ as the quantum propensity of the event $A\subseteq\Omega _n$. In general, $\mu _\rho$ is not additive on $2^{\Omega _n}$ so $\mu _\rho$ is not a measure. However, $\mu _\rho$ is \textit{grade}-2 additive \cite{djs10,gud11,hen09,sor94} in the sense that if $A,B,C\in 2^{\Omega _n}$ are mutually disjoint, then
\begin{equation*}
\mu _\rho (A\cup B\cup C)=\mu _\rho (A\cup B)+\mu _\rho (A\cup C)+\mu _\rho (B\cup C)-\mu _\rho (A)-\mu _\rho (B)-\mu _\rho (C)
\end{equation*}

Let $\rho _n$ be a probability operator on $H_n$, $n=1,2,\ldots\,$. We say that the sequence $\brac{\rho _n}$ is \textit{consistent} if
\begin{equation*}
D_{\rho _{n+1}}(A\to ,B\to)=D_{\rho _n}(A,B)
\end{equation*}
for every $A,B\subseteq\Omega _n$. We call a consistent sequence $\brac{\rho _n}$ a \textit{quantum sequential growth process} (QSGP). Now let $\brac{\rho _n}$ be a QSGP and denote the corresponding $q$-measures by $\mu _n$. A set $A\in\ascript$ is \textit{suitable} if $\lim\mu _n(A^n)$ exists (and is finite) in which case we define $\mu (A)=\lim\mu _n(A^n)$. We denote the collection of suitable sets by $\sscript (\Omega )$. It follows from consistency that if $A=\rmcyl (B)$ for $B\subseteq\Omega _n$, then
$\lim\mu _m(A^m)=\mu _n(B)$. Hence, $A\in\sscript (\Omega )$ and $\mu (A)=\mu _n(B)$. We conclude that
$\cscript (\Omega )\subseteq\sscript (\Omega )\subseteq\ascript$ and it can be shown that the inclusions are proper, in general. In a certain sense, $\mu$ is a $q$-measure on $\sscript (\Omega )$ that extends the $q$-measures $\mu _n$. There are physically relevant sets in $\ascript$ that are not in $\cscript (\Omega )$. In this case it is important to know whether such a set $A$ is in $\sscript (\Omega )$ and to find $\mu (A)$. For example, if $\omega\in\Omega$, then $\brac{\omega}=\cap\brac{\omega}^n\in\ascript$ but
$\brac{\omega}\notin\cscript (\Omega )$. Also, the complement
$\brac{\omega}'\notin\cscript (\Omega )$.

We now consider a method for constructing a QSGP. A \textit{transition amplitude} is a map $\atilde\colon\pscript\times\pscript\to\complex$ such that
$\atilde (x,y)=0$ if $x\not\to y$ and $\sum _y\atilde (x,y)=1$ for every $x\in\pscript$. This is similar to a Markov chain except $\atilde (x,y)$ may be complex. The \textit{amplitude process} (AP) corresponding to $\atilde$ is given by the maps $a_n\colon\Omega _n\to\complex$ where
\begin{equation*}
a_n(\omega _1\omega _2\cdots\omega _n)
  =\atilde (\omega _1,\omega _2)\atilde (\omega _2,\omega _3)\cdots\atilde (\omega _{n-1},\omega _n)
\end{equation*}
We can consider $a_n$ to be a vector in $H_n$. Notice that
\begin{equation*}
\elbows{1_n,a_n}=\sum _{\omega\in\Omega _n}a_n(\omega )=1
\end{equation*}
Define the rank 1 positive operator $\rho _n=\ket{a_n}\bra{a_n}$ on $H_n$. Since
\begin{equation*}
\elbows{\rho _n1_n,1_n}=\ab{\elbows{1_n,a_n}}^2=1
\end{equation*}
we conclude that $\rho _n$ is a probability operator.

The corresponding decoherence functional becomes
\begin{align*}
D_n(A,B)&=\elbows{\rho _n\chi _B,\chi _A}=\elbows{\chi _B,a_n}\elbows{a_n,\chi _A}\\
&=\sum _{\omega\in A}\overline{a_n(\omega )}\sum _{\omega\in B}a_n(\omega )
\end{align*}
In particular, for $\omega ,\omega '\in\Omega _n$, $D_n(\omega ,\omega ')=\overline{a_n(\omega )}a_n(\omega ')$ are the matrix elements of
$\rho _n$. The $q$-measure $\mu _n\colon 2^{\Omega _n}\to\real ^+$ becomes
\begin{equation*}
\mu _n(A)=D_n(A,A)=\ab{\sum _{\omega\in A}a_n(\omega )}^2
\end{equation*}
It is shown in \cite{gud13b} that the sequence $\brac{\rho _n}$ is consistent and hence forms a QSGP.

\section{Labeled Causets} 
A \textit{labeling} for a causet $x$ is a bijection $\ell\colon x\to\brac{1,2,\ldots ,\ab{x}}$ such that $a,b\in x$ with $a<b$ implies that
$\ell (a)<\ell (b)$. A \textit{labeled causet} is a pair $(x,\ell )$ where $\ell$ is a labeling of $x$. For simplicity, we frequently write
$x=(x,\ell )$ and call $x$ an $\ell$-\textit{causet}. Two $\ell$-causets $x$ and $y$ are \textit{isomorphic} if there exists a bijection $\phi\colon x\to y$ such that $a<b$ if and only if
$\phi (a)<\phi (b)$ and $\ell\sqbrac{\phi (a)}=\ell (a)$ for every $a\in x$. Isomorphic $\ell$-causets are identified. A given unlabeled causet $x$ can always be labeled. Just take a maximal element $a\in x$ and label it $\ell (a)=\ab{x}$. Remove $a$ from $x$ to form $x\smallsetminus\brac{a}$ and label a maximal element $b\in x\smallsetminus\brac{a}$ by $\ell (b)=\ab{x}-1$. Continue this process until there is only one element $c$ left and label it $\ell (c)=1$. To show that $\ell$ is a labeling of $x$, suppose $d,e\in x$ with $d<e$. Now there is a maximal chain $d<d_1<\cdots d_k<e$ in $x$. By the way $\ell$ was contracted, we have
\begin{equation*}
\ell (d)<\ell (d_1)<\cdots\ell (d_k)<\ell (e)
\end{equation*}
so $\ell (d)<\ell (e)$. Whenever, there is a choice of maximal elements we may obtain a new labeling, so there usually are many ways to label a causet. (There are exceptions, like a chain or antichain.)

We denote the set of $\ell$-causets with cardinality $n$ by $\pscript '_n$ and the set of all $\ell$-causets by $\pscript '=\cup\pscript '_n$. The definitions of Section~2 such as producer, offspring, paths, QSGPs and APs are essentially the same for $\ell$-causets as they were for causets. For $x,y\in\pscript '$ if $y=x\uparrow a$, we always label $a$ with the integer $\ab{x}+1=\ab{y}$. We denote the collection of $n$-paths in $\pscript '$ by $\Omega '_n$ and the collection of paths in $\pscript '$ by $\Omega '$. We define the Hilbert spaces $H'_n=L_2(\Omega '_n)$ as before. If $\omega =\omega _1\omega _2\cdots\omega _n\in\Omega '_n$ we say that $\omega _j$ is \textit{contained} in $\omega$, $j=1,2,\ldots ,n$.

\begin{lem}       
\label{lem31}
An $\ell$-causet $y$ cannot be the offspring of two distinct $\ell$-causet producers.
\end{lem}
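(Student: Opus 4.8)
The plan is to show that an $\ell$-causet $y$ with $\ab{y}=n+1$ determines its producer uniquely, by showing that the producer must be obtained from $y$ by deleting the vertex labeled $n+1$. First I would observe that if $x\to y$ with $y = x\ouparrow a$, then by the labeling convention for $\ell$-causets stated just before the lemma, the adjoined maximal element $a$ receives the label $\ab{x}+1 = n+1$. Hence $x$ is order-isomorphic (as an $\ell$-causet) to $y\smallsetminus\brac{a}$, where $a$ is \emph{the} element of $y$ carrying the label $n+1$. Since the label $n+1$ occurs exactly once in $y$ (a labeling is a bijection onto $\brac{1,\dots,n+1}$), the element to be deleted is uniquely pinned down, and the order relation and labeling that $x$ inherits from $y$ are completely determined.

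The one point that needs a small argument is that the element of $y$ labeled $n+1$ is in fact maximal in $y$: if it were below some $b\in y$, then the labeling property would force $n+1 = \ell(a) < \ell(b) \le n+1$, a contradiction. So whenever $x\to y$, the producer $x$ must be exactly the $\ell$-causet obtained by removing the (necessarily maximal) vertex labeled $\ab{y}$ from $y$ and keeping the induced order and the induced labeling. Therefore any two producers $x_1, x_2$ of $y$ are both equal to this one $\ell$-causet, and in particular $x_1$ and $x_2$ are isomorphic as $\ell$-causets, hence identified.

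I do not expect a serious obstacle here; the content is almost entirely bookkeeping about the labeling convention. The only place to be careful is to keep the distinction between labeled and unlabeled causets straight: for \emph{unlabeled} causets the analogous statement is false (one may remove any maximal element, and different maximal elements can give non-isomorphic results, or isomorphic results reached by different producers), and the lemma is precisely the assertion that labeling removes this ambiguity. So the proof should make explicit use of the convention ``if $y = x\ouparrow a$ then $a$ is labeled $\ab{y}$,'' since that is the only ingredient that is special to $\pscript'$ as opposed to $\pscript$.
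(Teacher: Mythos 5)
Your proof is correct and follows essentially the same route as the paper: the producer of $y$ must be $y$ with the (unique, necessarily maximal) element labeled $\ab{y}$ deleted, which pins it down uniquely. Your additional remarks on why that element is maximal and on the role of the labeling convention are fine elaborations of the same argument.
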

\begin{proof}
If we delete the element of $y$ labeled $\ab{y}$ we obtain a producer of $y$. But any producer of $y$ is obtained in this way so there is only one
$\ell$-causet that produces $y$.
\end{proof}

The next lemma shows that unlike in $\pscript$, paths in $\pscript '$ never cross (except at $\omega _1$).

\begin{lem}       
\label{lem32}
If $x\in\pscript '_n$, then $x$ is contained in a unique $n$-path.
\end{lem}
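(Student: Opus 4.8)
The plan is to reduce everything to Lemma~\ref{lem31}, which already supplies the uniqueness of producers. First I would record a bookkeeping observation: if $\omega _1\omega _2\cdots\omega _n$ is an $n$-path in $\pscript '$ that contains $x$, then since $\omega _j\in\pscript '_j$ has cardinality $j$ while $x$ has cardinality $n$, the only term of the string that can equal $x$ is the last one, so $\omega _n=x$. Thus ``being contained in an $n$-path'' and ``being the terminal term of an $n$-path'' coincide for $x\in\pscript '_n$, and the lemma reduces to showing there is exactly one $n$-path ending at $x$.

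For uniqueness I would argue by downward recursion on the index. Suppose $\omega _1\cdots\omega _n$ is an $n$-path with $\omega _n=x$. The relation $\omega _{n-1}\to\omega _n$ says $\omega _{n-1}$ is a producer of $x$, and by Lemma~\ref{lem31} there is only one such $\ell$-causet, so $\omega _{n-1}$ is determined by $x$. Likewise $\omega _{n-2}$ is forced to be the unique producer of $\omega _{n-1}$, and so on down to $\omega _1$. Hence the entire string is determined, giving uniqueness.

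For existence I would exhibit the path explicitly: set $\omega _n=x$ and, having defined $\omega _j$ for some $j\ge 2$, let $\omega _{j-1}$ be the $\ell$-causet obtained from $\omega _j$ by deleting the element labeled $\ab{\omega _j}=j$ — this is precisely the producer of $\omega _j$ described in the proof of Lemma~\ref{lem31}. Each step lowers cardinality by one, so $\omega _{j-1}\in\pscript '_{j-1}$ and $\omega _{j-1}\to\omega _j$ by construction; after $n-1$ steps one reaches $\omega _1\in\pscript '_1$. The resulting string is a genuine $n$-path and it contains $x$, so existence holds.

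I expect essentially no obstacle here once Lemma~\ref{lem31} is available; the only points deserving a little care are the cardinality bookkeeping of the first paragraph and verifying that the iterated deletion procedure stays inside $\pscript '$ and terminates in $\pscript '_1$. Conceptually the lemma just says that $(\pscript ',\to )$ is a rooted tree whose root is the one-element $\ell$-causet, so that each vertex is joined to the root by a unique directed path.
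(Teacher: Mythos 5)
Your proposal is correct and follows essentially the same route as the paper: existence by iteratively deleting the top-labeled element to descend to the one-element $\ell$-causet, and uniqueness by repeated application of Lemma~\ref{lem31}. The only addition is your explicit cardinality remark that $x$ can only appear as the terminal term, which the paper leaves implicit.
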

\begin{proof}
Let $\omega _n=x$. If we delete the element of $\omega _n$ labeled $n$, then the resulting set $\omega _{n-1}$ is an $\ell$-causet with $\omega _{n-1}\to\omega _n$ If we next delete the element of $\omega _{n-1}$ labeled $n-1$, then the resulting set
$\omega _{n-1}$ is an $\ell$-causet with $\omega _{n-2}\to\omega _{n-1}$. Continue this process until we obtain the one element
$\ell$-causet $\omega _1$. Then $\omega =\omega _1\omega _2\cdots\omega _n$ is an $n$-path containing $x$. If there were another $n$-path $\omega '=\omega '_1\omega '_2\cdots\omega '_n$ with $\omega '_n=x$, then $\omega '_{n-1}=\omega _{n-1}$ because of Lemma~\ref{lem31}. But by Lemma~\ref{lem31} again, $\omega '_{n-2}=\omega _{n-2}$ and continuing we obtain $\omega '=\omega$. Hence, $\omega$ is unique.
\end{proof}

Let $x=\brac{a_1,a_2,\ldots ,a_n}$ be an $\ell$-causet where we can assume without loss of generality that the label on $a_j$ is $j$, $j=1,\ldots ,n$. Define
\begin{equation*}
j_x\ouparrow =\brac{i\in\positive\colon a_j\le a_i}
\end{equation*}

\begin{lem}       
\label{lem33}
If $x,y,z\in\pscript '$ and $x\to y,z$, then $j_y\ouparrow\subseteq j_z\ouparrow$ or $j_x\ouparrow\subseteq j_y\ouparrow$ for all
$j=1,2,\ldots ,\ab{y}$.
\end{lem}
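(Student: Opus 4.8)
The plan is to exploit the fact that forming an offspring of $x$ alters only the order relations that involve the newly adjoined maximal element and leaves the order among the elements already in $x$ untouched. First I would fix notation: set $n=\ab{x}$, so that $\ab{y}=\ab{z}=n+1$ by the definition of $\to$, and let $a_1,\ldots ,a_n$ be the elements of $x$ with $a_j$ carrying the label $j$. Since $y$ (respectively $z$) is obtained from $x$ by adjoining a single maximal element labeled $n+1$, the causet $x$ is exactly the restriction of $y$ (and of $z$) to the elements labeled $1,\ldots ,n$; in particular, for $i,j\le n$ we have $a_j\le a_i$ in $x$ if and only if $a_j\le a_i$ in $y$, and the same holds in $z$.

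Next I would draw the one consequence that does all the work. For $1\le j\le n$ the set $j_x\ouparrow$ is contained in $\brac{1,\ldots ,n}$, and by the previous remark $j_x\ouparrow =j_y\ouparrow\cap\brac{1,\ldots ,n}$ (adjoining a maximal element can only enlarge the set of labels at or above $a_j$, never shrink it). Hence $j_x\ouparrow\subseteq j_y\ouparrow$, so the second alternative in the statement holds for every $j$ with $1\le j\le n$.

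Finally I would dispose of the remaining value $j=n+1=\ab{y}$ separately, since there $j_x\ouparrow$ is not meaningful. The element of $y$ labeled $n+1$ is maximal in $y$ and the element of $z$ labeled $n+1$ is maximal in $z$, so $j_y\ouparrow =\brac{n+1}=j_z\ouparrow$ and the first alternative holds. Combining the two cases proves the lemma.

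I do not anticipate a real obstacle here: the whole point is the observation that the order among $a_1,\ldots ,a_n$ is frozen when we pass to $y$ or $z$, after which the second alternative is automatic for the ``old'' labels. The only place requiring a moment's care is the top label $j=\ab{y}$, where one must switch to verifying the first alternative directly; one could instead read $j_x\ouparrow$ as $\emptyset$ for $j>\ab{x}$, in which case even that value of $j$ falls under the second alternative.
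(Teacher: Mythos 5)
Your argument proves only the statement as literally printed, and that literal statement is vacuous: as you yourself note, $j_x\ouparrow\subseteq j_y\ouparrow$ holds automatically for every $j\le\ab{x}$ because $y$ restricted to the labels $1,\ldots ,\ab{x}$ is exactly $x$, and the top label is handled trivially. The fact that there was ``no real obstacle'' should have been the warning sign that the printed statement contains a misprint. The claim the paper actually proves, and the one that Theorem~\ref{thm34} needs in order to show that the lexicographic relation is a \emph{total} order, is comparability of the up-sets in the two siblings: for every $j$, either $j_y\ouparrow\subseteq j_z\ouparrow$ or $j_z\ouparrow\subseteq j_y\ouparrow$ (the subscript $x$ in the statement should be $z$). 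Your proof never compares $j_y\ouparrow$ with $j_z\ouparrow$ except at the top label, so it does not deliver this; in particular it could not support the step in Theorem~\ref{thm34} where the first index at which $j_y\ouparrow\ne j_z\ouparrow$ yields a proper inclusion one way or the other. So, measured against the intended lemma, there is a genuine gap.

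The good news is that your central observation closes it in one line. With $n=\ab{x}$, you showed that for $j\le n$ one has $j_y\ouparrow\cap\brac{1,\ldots ,n}=j_x\ouparrow =j_z\ouparrow\cap\brac{1,\ldots ,n}$. Hence $j_y\ouparrow$ and $j_z\ouparrow$ can differ only in whether they contain the new label $n+1$: each is either $j_x\ouparrow$ or $j_x\ouparrow\cup\brac{n+1}$, so one is contained in the other, with equality unless exactly one of the two adjoined maximal elements lies above the element labeled $j$. Together with your remark that at $j=n+1$ both up-sets equal $\brac{n+1}$, this is precisely the paper's four-case analysis (cases according to whether the new element of $y$, respectively of $z$, lies above the element labeled $j$). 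You just need to state and prove that comparability, rather than the tautological containment $j_x\ouparrow\subseteq j_y\ouparrow$.
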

\begin{proof}
Let $y=\brac{a_1,\ldots ,a_n}$ and $z=\brac{b_1,\ldots ,b_n}$. We can assume that $a_i=b_i$, $i=1,\ldots ,n-1$. If $a_j\not\le a_n$ and
$b_j\not\le b_n$, then $j_y\ouparrow= j_z\ouparrow$. If $a_j\le a_n$ and $b_j\le b_n$, then again $j_y\ouparrow= j_z\ouparrow$. If
$a_j\le a_n$ and $b_j\not\le b_n$ then $j_z\ouparrow\subseteq j_y\ouparrow$ and if $a_j\not\le a_n$ and $b_j\le b_n$ then
$j_y\ouparrow\subseteq j_z\ouparrow$.
\end{proof}

Order the offspring of $x\in\pscript '$ lexicographically as follows. If $x\to y,z$ then $y<z$ if $1_y\ouparrow =1_z\ouparrow ,\ldots $
$j_y\ouparrow = j_z\ouparrow$, $(j+1)_y\ouparrow\subsetneq (j+1)_z$.

\begin{thm}       
\label{thm34}
The relation $<$ is a total order.
\end{thm}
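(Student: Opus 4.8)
The plan is to verify the three defining properties of a strict total order: irreflexivity, trichotomy, and transitivity. Fix $x\in\pscript'_n$; every offspring of $x$ consists of $x$ together with one further element, which by convention carries the label $n+1$. For an offspring $y$ write $\sigma(y)$ for the finite sequence $\paren{1_y\ouparrow,2_y\ouparrow,\ldots,(n+1)_y\ouparrow}$, and read the definition of $<$ as: $y<z$ iff, at the least index $j$ where $\sigma(y)$ and $\sigma(z)$ differ, one has $j_y\ouparrow\subsetneq j_z\ouparrow$.

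First I would establish two facts that make the definition meaningful. (i) \emph{If $\sigma(y)=\sigma(z)$ then $y=z$.} Indeed, for $j\le n$ the condition $n+1\in j_y\ouparrow$ records exactly whether the element labeled $j$ lies below the top element of $y$; since the order relations among the first $n$ elements are those of $x$ and the top element is maximal, equality of the sequences forces $y$ and $z$ to have the same order relations, hence $y=z$ once labeled causets are identified. Consequently, for $y\ne z$ the sequences $\sigma(y)$ and $\sigma(z)$ differ, so there is a well-defined least index of disagreement $j(y,z)$, and it is symmetric in $y$ and $z$. (ii) By Lemma~\ref{lem33}, for every index $j$ the sets $j_y\ouparrow$ and $j_z\ouparrow$ are $\subseteq$-comparable, so at $j=j(y,z)$ exactly one of $j_y\ouparrow\subsetneq j_z\ouparrow$ or $j_z\ouparrow\subsetneq j_y\ouparrow$ holds. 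Irreflexivity is then immediate (a sequence does not disagree with itself), and trichotomy follows at once: for $y\ne z$, applying (ii) at $j(y,z)$ yields exactly one of $y<z$, $z<y$, and these are mutually exclusive because the symmetric index $j(y,z)$ governs both comparisons.

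The real work is transitivity. Assume $y<z$ and $z<w$, and put $j=j(y,z)$, $k=j(z,w)$, so that $i_y\ouparrow=i_z\ouparrow$ for $i<j$ and $j_y\ouparrow\subsetneq j_z\ouparrow$, while $i_z\ouparrow=i_w\ouparrow$ for $i<k$ and $k_z\ouparrow\subsetneq k_w\ouparrow$. I would split on the relative size of $j$ and $k$. If $j<k$, then $\sigma(y)$ and $\sigma(w)$ agree at every index below $j$, and at $j$ one has $j_y\ouparrow\subsetneq j_z\ouparrow=j_w\ouparrow$. If $j>k$, they agree below $k$, and at $k$ one has $k_y\ouparrow=k_z\ouparrow\subsetneq k_w\ouparrow$. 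If $j=k$, they agree below $j$, and at $j$ one has $j_y\ouparrow\subsetneq j_z\ouparrow\subsetneq j_w\ouparrow$. In every case $\sigma(y)$ and $\sigma(w)$ agree strictly below this index and $y$'s set is strictly contained in $w$'s at it, so this index is precisely $j(y,w)$ and $y<w$.

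I expect the case analysis for transitivity to be the only genuine obstacle, and it is pure index bookkeeping; the conceptual input is carried entirely by Lemma~\ref{lem33} (pairwise $\subseteq$-comparability of the $j\ouparrow$ sets) together with the observation in (i) that an offspring of $x$ is determined by its sequence of up-sets.
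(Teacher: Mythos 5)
Your proof is correct and follows essentially the same route as the paper: you use Lemma~\ref{lem33} for pairwise $\subseteq$-comparability at the first index of disagreement (giving trichotomy), the observation that equal up-set sequences force the adjoined maximal elements to have the same past so the offspring coincide, and index bookkeeping on $\min(j,k)$ for transitivity. Your explicit treatment of the case $j=k$ is in fact slightly cleaner than the paper's, which folds it into the $\min(j,k)=j$ case where the claimed equality $(j+1)_z\ouparrow=(j+1)_u\ouparrow$ should there be a strict inclusion, though the conclusion is unaffected.
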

\begin{proof}
Clearly $y\not< y$. If $y<z$, the $z\not < y$. Suppose that $x\to y,z,u$ and $y<z$, $z<u$. Then
$1_y\ouparrow =1_z\ouparrow ,\cdots $, $j_y\ouparrow =j_z\ouparrow$, $(j+1)_y\ouparrow\subsetneq (j+1)z$ and 
$1_z\ouparrow =1_u\ouparrow ,\ldots ,k_z\ouparrow=k_u\ouparrow$, $(k+1)_z\subsetneq (k+1)_u\ouparrow$. We then have
\begin{equation*}
1_y\ouparrow =1_u\ouparrow ,\ldots ,\min (j,k)_y\ouparrow =\min (j,k)_u\ouparrow
\end{equation*}
If $\min (j,k)=j$, then
\begin{equation*}
1_y\ouparrow =1_u\ouparrow ,\ldots ,j_y\ouparrow =j_z\ouparrow =j_u\ouparrow, 
  (j+1)_y\ouparrow\subsetneq (j+1)_z\ouparrow =(j+1)_u\ouparrow
\end{equation*}
If $\min (j,k)=k$, then
\begin{equation*}
1_y\ouparrow =1_u\ouparrow ,\ldots ,k_y\ouparrow =k_z\ouparrow =k_u\ouparrow ,
  (k+1)_y\ouparrow =(k+1)_z\ouparrow\subsetneq (k+1)_u\ouparrow
\end{equation*}
In either case, $y<u$ so $<$ is a partial order relation. To show that $<$ is a total order relation, suppose that $x\to y,z$. If
$j_y\ouparrow =j_z\ouparrow$ for $j=1,2,\ldots ,\ab{y}$, then the adjoined maximal element of $y$ has the same parents as the adjoined maximal element of $z$. Hence, $y$ and $z$ are isomorphic $\ell$-causets so $y=z$. Otherwise, by Lemma~\ref{lem33},
$1_y\ouparrow =1_z\ouparrow ,\ldots ,j_y\ouparrow =j_z\ouparrow$ and $(j+)_y\ouparrow\subsetneq (j+1)_z\ouparrow$ or
$(j+1)_z\ouparrow\subsetneq (j+1)_y\ouparrow$ so $y<z$ or $z<y$.
\end{proof}

Two elements $a,b$ of an $\ell$-causet are \textit{comparable} if $a\le b$ or $b\le a$. Otherwise, $a$ and $b$ are \textit{incomparable}. An \textit{antichain} in $x\in\pscript '$ is a set of mutually incomparable elements of $x$. It is shown in \cite{gud13a} that the number of offspring $o(x)$ of $x\in\pscript '$ is the number of distinct antichains in $x$. Let $y_1<y_2<\cdots <y_{o(x)}$ be the offspring of
$x\in\pscript '$ ordered lexicographically. We call $j$ the \textit{succession} of $y_j$ and $\ab{y_i}$ the \textit{generation} of $y_j$. If
$x\in\pscript '$ with $x\ne\emptyset$, then $x$ has a unique producer so its succession $s(x)$ is well defined. By convention we define $s(\emptyset )=0$. We can uniquely specify each $x\in\pscript '$ by listing its \textit{succession sequence}
\begin{equation*}
\paren{s(x_0),s(x_1),\ldots ,s(x_n)}
\end{equation*}
where $x_0\to x_1\to\cdots\to x_n=x$, $n=\ab{x}$. Of course $x_0=\emptyset$ and $s(x_0)=0$, $s(x_1)=1$ for all $x\in\pscript '$.
\bigskip

\hskip -3pc
\includegraphics[scale=.65]{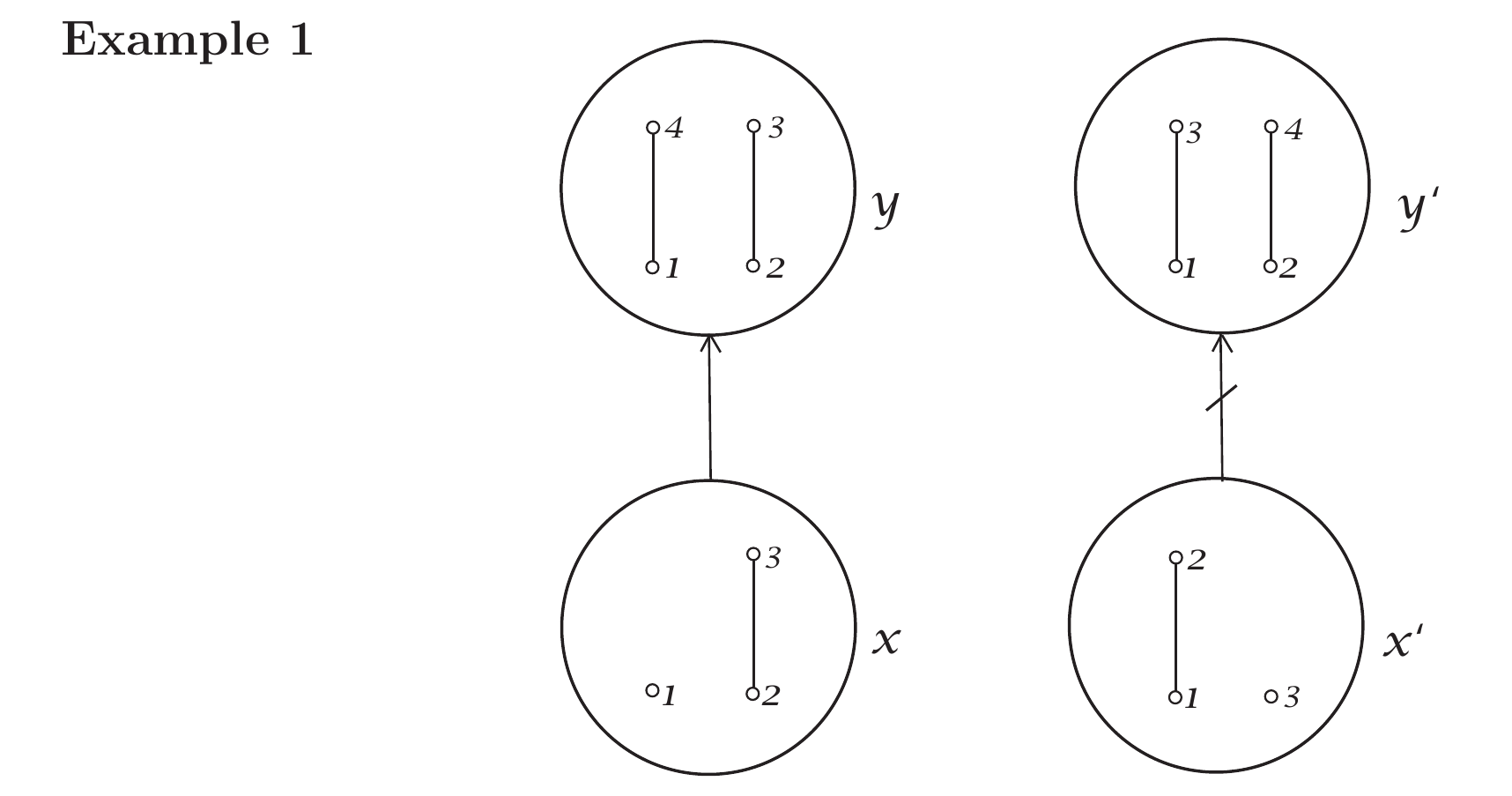}
\medskip

If $x,y\in\pscript '$ we write $x\sim y$ if $x$ and $y$ are order isomorphic. Then $\sim$ is an equivalence relation and we denote the equivalence class containing $x$ by $[x]$. Let
\begin{equation*}
\qscript =\pscript '/\sim =\brac{[x]\colon x\in\pscript '}
\end{equation*}
If $x\in\pscript '$, let $\xhat\in\pscript$ be $x$ without the labels. Since we identify isomorphic causets we have $x\sim y$ if and only if
$\xhat =\yhat$. Letting $\phi\colon\qscript\to\pscript$ be $\phi\paren{[x]}=\xhat$ we see that $\phi$ is well-defined and it is easy to check that $\phi$ is a bijection. If $x,y\in\pscript '$ and $x\to y$, then clearly $\xhat\to\yhat$ in $\pscript$. Unfortunately, if $x\to y$ in $\pscript '$ and $x'\sim x$, $y'\sim y$, then we can have $x'\not\to y'$ as Example~1 shows. 

For $x,y\in\pscript '$ we write $[x]\to [y]$ if there exist $x',y'\in\pscript '$ with $x'\sim x$, $y'\sim y$ and $x'\to y'$. However, the identification $\phi\paren{[x]}=\xhat$ is not very useful because $[x]\to [y]$ implies $\phi\paren{[x]}\to\phi\paren{[y]}$ but
$\phi\paren{[x]}\to\phi\paren{[y]}$ need not imply $x\to y$ as the previous example shows. It is more useful to define a map
$\psi\colon\Omega '_n\to\Omega _n$ as follows. If $\omega =\omega _1\omega _3\cdots\omega _n\in\Omega '_n$ we define
\begin{equation*}
\psi (\omega )=\omegahat =\omegahat _1\omegahat _2\cdots\omegahat _n
\end{equation*}
Since any $x\in\pscript$ can be labeled, we have that $\psi\colon\Omega '_n\to\Omega _n$ is surjective.

\begin{lem}       
\label{lem35}
Let $\omega =\omega _1\omega _2\cdots\omega _n$, $\omega '=\omega '_1\omega '_2\cdots\omega '_n\in\Omega '_n$. Then
$\psi (\omega )=\psi (\omega ')$ if and only if $\omega _j\sim\omega '_j$, $j=1,2,\ldots ,n$.
\end{lem}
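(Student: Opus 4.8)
The plan is to treat this as a componentwise reformulation of the fact, already recorded above, that for $\ell$-causets $x,y$ one has $x\sim y$ if and only if $\xhat=\yhat$ in $\pscript$. No new construction is needed; everything follows by unwinding definitions.

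First I would recall how $\psi$ acts: it sends $\omega=\omega _1\omega _2\cdots\omega _n\in\Omega '_n$ to the string $\omegahat=\omegahat _1\omegahat _2\cdots\omegahat _n$, where $\omegahat _j\in\pscript$ is the $\ell$-causet $\omega _j$ with its labels deleted. (That this string is genuinely an $n$-path, so that $\psi$ is well defined, uses the earlier observation that $x\to y$ in $\pscript '$ forces $\xhat\to\yhat$ in $\pscript$.) Writing $\omegahat'=\psi(\omega ')$ with entries $\omegahat'_j$, namely the unlabeled version of $\omega '_j$, the key point is simply that two elements of $\Omega _n$ are equal exactly when they agree as strings, i.e.\ in each of their $n$ entries. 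Hence $\psi(\omega )=\psi(\omega ')$ if and only if $\omegahat _j=\omegahat'_j$ in $\pscript$ for every $j=1,2,\ldots ,n$.

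Next I would invoke the equivalence $x\sim y\iff\xhat=\yhat$ from the discussion preceding the lemma, applied with $x=\omega _j$ and $y=\omega '_j$: this says $\omegahat _j=\omegahat'_j$ in $\pscript$ precisely when $\omega _j\sim\omega '_j$. Chaining the two equivalences gives $\psi(\omega )=\psi(\omega ')$ if and only if $\omega _j\sim\omega '_j$ for every $j$, which is the claim.

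There is no substantive obstacle here; the argument is purely definitional. The only two points that merit a moment's attention are that equality in $\Omega _n$ means equality of strings and is therefore checked entry by entry, and that the standing identification of order-isomorphic causets in $\pscript$ is exactly what converts the relation $\omegahat _j=\omegahat'_j$ into $\omega _j\sim\omega '_j$.
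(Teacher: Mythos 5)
Your proposal is correct and follows essentially the same route as the paper: both arguments unwind $\psi(\omega)=\psi(\omega')$ into entrywise equality $\omegahat_j=\omegahat'_j$ of the underlying strings and then convert this, via the identification of isomorphic causets (i.e.\ $x\sim y$ iff $\xhat=\yhat$), into $\omega_j\sim\omega'_j$ for each $j$, and conversely. Nothing is missing.
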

\begin{proof}
If $\psi (\omega )=\psi (\omega ')$ then $\omegahat =\omegahat '$ so
\begin{equation*}
\omegahat _1\omegahat _2\cdots\omegahat _n=\omegahat '_1\omegahat '_2\cdots\omegahat '_n
\end{equation*}
which implies that $\omegahat _j=\omegahat '_j$, $j=1,\ldots ,n$. We conclude that $\omega _j\sim\omega '_j$, $j=1,\ldots ,n$. Conversely, if $\omega _j\sim\omega '_j$, $j=1,\ldots ,n$, then $\omegahat _j=\omegahat '_j$. Hence
\begin{equation*}
\psi (\omega )=\omegahat _1\omegahat _2\cdots\omegahat _n=\omegahat '_1\omegahat '_2\cdots\omegahat '_n
  =\psi (\omega ')\qedhere
\end{equation*}
\end{proof}

 Let $x,y\in\pscript '$ with $x\to y'$ and $y'\sim y$ we write $y\sim _xy'$. Then $\sim _x$ is an equivalence relation and we denote the equivalence classes by $[y]_x$. If $x\to y$, the \textit{multiplicity} of $x\to y$ is denoted by $m(x\to y)$ and is defined by
 $m(x\to y)=\ab{[y]_x}$. The \textit{multiplicity} $m(\omega )$ of $\omega\in\Omega '_n$ is defined by
 $m(\omega )=\ab{\psi ^{-1}(\omegahat )}$.

\begin{lem}       
\label{lem36}
If $\omega =\omega _1\omega _2\cdots\omega _n\in\Omega '_n$, then
\begin{equation*}
m(\omega )=m (\omega _1\to\omega _2)m(\omega _2\to\omega _3)\cdots m(\omega _{n-1}\to\omega _n)
\end{equation*}
\end{lem}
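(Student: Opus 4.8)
The plan is to count the set $\psi^{-1}(\omegahat)$ directly, by building up paths in $\pscript '$ one vertex at a time. By Lemma~\ref{lem35}, an $n$-path $\omega '=\omega '_1\omega '_2\cdots\omega '_n$ lies in $\psi ^{-1}(\omegahat )$ precisely when $\omega '_j\sim\omega _j$ for every $j$, so by definition $m(\omega )=\ab{\psi ^{-1}(\omegahat )}$ is the number of such $n$-paths. For $1\le k\le n$ let $N_k$ denote the number of $k$-paths $\omega '_1\cdots\omega '_k\in\Omega '_k$ with $\omega '_j\sim\omega _j$ for $j=1,\ldots ,k$; then $m(\omega )=N_n$. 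Since $\pscript '_1$ is a singleton we have $N_1=1$, so it suffices to prove that $N_k=N_{k-1}\,m(\omega _{k-1}\to\omega _k)$ for $2\le k\le n$, after which the claimed product formula follows by induction on $k$.

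The crux is the following invariance property: if $x,x'\in\pscript '$ with $x\sim x'$, then $\ab{[y]_x}=\ab{[y]_{x'}}$ for every $y\in\pscript '$; equivalently, $x$ and $x'$ have the same number of offspring in each isomorphism class. I would prove this by fixing an order isomorphism $\phi\colon x\to x'$ and checking that it induces a type-preserving bijection between the offspring of $x$ and those of $x'$. Indeed, every offspring of $x$ has the form $x\ouparrow a$, where $\brac{b\in x\colon b<a}$ is a down-set of $x$, and conversely each down-set of $x$ arises in this way; $\phi$ carries down-sets of $x$ bijectively onto down-sets of $x'$, and extending $\phi$ by sending $a$ to the adjoined element $a'$ of the corresponding offspring of $x'$ — which, by the labeling convention, also receives the label $\ab{x}+1=\ab{x'}+1$ — yields an order isomorphism of $\ell$-causets from $x\ouparrow a$ onto that offspring. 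Hence the bijection on offspring preserves isomorphism class, which is exactly the assertion.

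Granting the invariance, the inductive step goes as follows. A $k$-path counted by $N_k$ amounts to a $(k-1)$-path $\omega '_1\cdots\omega '_{k-1}$ counted by $N_{k-1}$ together with a choice of offspring $\omega '_k$ of $\omega '_{k-1}$ with $\omega '_k\sim\omega _k$; concatenation makes this correspondence a bijection, so $N_k$ equals the sum, over the $N_{k-1}$ admissible $(k-1)$-paths, of the quantities $\ab{[\omega _k]_{\omega '_{k-1}}}$. But $\omega '_{k-1}\sim\omega _{k-1}$, so by the invariance $\ab{[\omega _k]_{\omega '_{k-1}}}=\ab{[\omega _k]_{\omega _{k-1}}}=m(\omega _{k-1}\to\omega _k)$, a value independent of which $(k-1)$-path was chosen. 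Therefore $N_k=N_{k-1}\,m(\omega _{k-1}\to\omega _k)$, and unrolling this recursion from $k=n$ down to $k=1$ (with $N_1=1$) gives $m(\omega )=N_n=m(\omega _1\to\omega _2)m(\omega _2\to\omega _3)\cdots m(\omega _{n-1}\to\omega _n)$.

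I expect the only real obstacle to be the invariance claim of the second paragraph — in particular, being careful that the extension of $\phi$ respects the fixed labeling rule for the newly adjoined vertex, so that it genuinely is an isomorphism of $\ell$-causets and the induced map on offspring is both well defined and type-preserving. Once that is settled, everything else is bookkeeping: Lemma~\ref{lem35} reduces the count to sequences with $\omega '_j\sim\omega _j$, and the per-step multiplicity is constant across the $\sim$-class by the invariance.
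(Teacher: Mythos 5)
Your proof is correct and takes essentially the same route as the paper: Lemma~\ref{lem35} reduces $m(\omega )=\ab{\psi ^{-1}(\omegahat )}$ to counting paths $\omega '$ with $\omega '_j\sim\omega _j$ for all $j$, and the count is assembled one step at a time as a product of per-step choices. The only difference is that you state and prove the invariance fact ($x\sim x'$ implies $\ab{[y]_x}=\ab{[y]_{x'}}$, via the bijection of down-sets induced by an order isomorphism) that the paper's one-line count tacitly assumes; that argument is sound, and note that you only need the extended map to be an order isomorphism (so the $\sim$-class is preserved), not a label-preserving isomorphism of $\ell$-causets, so the remark about the label $\ab{x}+1$ is dispensable.
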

\begin{proof}
If $\omega '=\omega '_1\omega '_2\cdots\omega '_n\in\Omega '_n$ where $\omega '_i\sim\omega _i$, $i=1,\ldots ,n$, then
$\psi (\omega ')=\psi (\omega )$. The number of $n$-paths of the form $\omega '$ is
\begin{equation*}
m(\omega _1\to\omega _2)m(\omega _2\to\omega _3)\cdots m(\omega _{n-1}\to\omega _n)
\end{equation*}
The result follows.
\end{proof}

\section{Amplitude Processes on $\pscript '$} 
In this section we present examples of various APs on $\pscript '$. As shown in Section~2, these APs can be used to construct QSGPs on $\pscript '$. Moreover, we shall show in Section~5 that any QSGP on $\pscript '$ can be ``compressed'' to a QSGP on $\pscript$. Recall that a transition amplitude on $\pscript '$ is a map $\atilde\colon\pscript '\times\pscript '\to\complex$ such that $\atilde (x,y)=0$ if $x\not\to y$ and $\sum _y\atilde (x,y)=1$. We say that $\atilde$ is \textit{covariant} if $\atilde (x,y)$ is independent of the labeling of $x$ and $y$; that is, if $x\sim x'$, $y\sim y'$ then $\atilde (x,y)=\atilde (x',y')$

If $y=x\uparrow a$, then $y$ is a \textit{leaf offspring} of $x$ if $a$ has no more than one parent. It is clear that for $x\in\pscript '$, $x$ has$\ab{x}+1$ leaf offspring and that there are $n!$ leaf offspring in $\pscript '_n$. For each $x\in\pscript '$, label the leaf offspring of $x$ lexicographically, $y_1,y_2,\ldots ,y_{\ab{x}+1}$. Define $\atilde\colon\pscript '\times\pscript '\to\complex$ by $\atilde (x,y)=0$ unless $y$ is a leaf offspring of $x$ and then
\begin{equation*}
\atilde (x,x_j)=-e^{2\pi ij(\ab{x}+2)},\quad j=1,\ldots ,\ab{x}+1
\end{equation*}
where of course $i=\sqrt{-1\,}$. It is easy to check that $\atilde$ is a transition amplitude on $\pscript '$ but $\atilde$ is not covariant.

If $\omega =\omega _1\omega _2\cdots\omega _n\in\Omega '_n$, then 
\begin{equation*}
a_n(\omega )=\atilde (\omega _1,\omega _2)\cdots\atilde (\omega _{n-1},\omega _n)
\end{equation*}
Hence, if $\omega _1\omega _2\cdots\in\Omega '$ and at least one $\omega _j$ is not a leaf offspring, then
\begin{equation*}
\mu _n\paren{\brac{\omega}^n}=\ab{a_n(\omega _1\cdots\omega _n)}^2=0
\end{equation*}
for $n$ sufficiently large. Hence, $\brac{\omega}\in\sscript (\Omega ')$ and $\mu\paren{\brac{\omega}}=0$. If every $\omega _n$ is a leaf offspring, then $\mu _n\paren{\brac{\omega}^n}=1$ for all $n$ so $\brac{\omega}\in\sscript (\Omega ')$ with $\mu\paren{\brac{\omega}}=1$. As another example, let $A\subseteq\Omega '$ be the set of paths for which no first succession leaf offspring except $\omega _1$ appear. Letting $F$ be the set of all first succession leaf offspring (except $\omega _1$), we have
\begin{align*}
\mu _n(A^n)&=\ab{\sum\brac{\atilde (\omega _1,\omega _2)\cdots\atilde (\omega _{n-1},\omega _n)\colon\omega _j\notin F}}^2\\
&=\ab{\sum\brac{\atilde (\omega _1,\omega _2)
  \cdots\atilde (\omega _{n-2},\omega _{n-1})(1+e^{2\pi i/(n+1)})\colon\omega _j\notin F}}^2\\
&=\ab{\sum\!\brac{\atilde (\omega _1,\omega _2)
  \cdots\atilde (\omega _{n-3},\omega _{n-2})(1\!+\!e^{2\pi i/n})(1\!+\!e^{2\pi i/(n+1)})\colon\!\omega _j\!\notin\!F}}^2\\
&\vdots\\
&=\ab{(1+e^{2\pi i/3})}^2\ab{(1+e^{2\pi i/4})}^2\cdots\ab{(1+e^{2\pi i/(n+1)}}^2\\
&=(2+2\cos 2\pi /3)^2(2+2\cos\pi /2)^2\cdots (2+2\cos 2\pi /(n+1))^2
\end{align*}
We conclude that $\lim\mu _n(A^n)=\infty$ so $A\not\in\sscript (\Omega )$.

As another example, let $\atilde (x,y)=0$ unless $y$ is a leaf offspring of $x$ and then $\atilde (x,y)=(\ab{x}+1)^{-1}$. Then $\atilde$ is a covariant transition amplitude. Let $L_n\subseteq\Omega '_n$ be the set of $n$-paths $\omega =\omega _1\omega _2\cdots\omega _n$ such that $\omega _j$ are leaf offspring, $j=1,2,\ldots ,n$. Then $a_n(\omega )=0$ for every $\omega\in\Omega '_n\smallsetminus L_n$ and $a_n(\omega )=1/n!$ for every $\omega\in L_n$. We conclude that $D_n(\omega ,\omega ')=0$ unless $\omega ,\omega '\in L_n$ in which case $D_n(\omega ,\omega ')=(n!)^{-2}$. Moreover, for $A,B\subseteq\Omega '_n$ we have
\begin{equation*}
D_n(A,B)=\sum\brac{D_n(\omega ,\omega ')\colon\omega\in A,\omega '\in B}=\frac{\ab{A\cap L_n}\ab{B\cap L_n}}{(n!)^2}
\end{equation*}
and $\mu _n(A)=(n!)^{-2}\ab{A\cap L_n}^2$. Let $L\subseteq\Omega '$ be the set of paths $\omega =\omega _1\omega _2\cdots$ where $\omega _j$ are leaf offspring, $j=1,2,\ldots\,$. If $\omega\in\Omega '\smallsetminus L$, then clearly $\mu _n\paren{\ab{\omega}^n}=0$ so $\brac{\omega}\in\sscript (\Omega ')$ with $\mu\paren{\brac{\omega}}=0$. If $\omega\in L$ then
$\mu _n\paren{\brac{\omega}^n}=(n!)^{-2}\to 0$ so again $\brac{\omega}\in\sscript (\Omega ')$ with $\mu\paren{\brac{\omega}}=0$. Also, if $\omega\in L$, then
\begin{equation*}
\mu _n\paren{\brac{\omega}'^n}=\frac{(n!-1)^2}{(n!)^2}=\paren{1-\tfrac{1}{n!}}^2\to 1
\end{equation*}
so $\brac{\omega}'\in\sscript (\Omega )$ with $\mu\paren{\brac{\omega}'}=1$. In a similar way, if $\omega\in\Omega '\smallsetminus L$ then $\brac{\omega}'\in\sscript (\Omega )$ with $\mu\paren{\brac{\omega}'}=1$. Let $A\subseteq L$ be the subset of $L$ consisting of paths $\omega =\omega _1\omega _2\cdots$ where each $\omega _i$ is a connected graph. It is easy to check that $\ab{A^n}=(n-1)!$. Hence,
\begin{equation*}
\mu _n(A^n)=\frac{\sqbrac{(n-1)!}^2}{(n!)^2}=\frac{1}{n^2}\to 0
\end{equation*}
Thus, $A\in\sscript (\Omega ')$ and $\mu (A)=0$. Since $\mu _n$ is the square of a measure on $\cscript (\Omega '_n)$, $\mu$ has a unique extension $\nu$ from $\cscript (\Omega ')$ to $\ascript '$ as a square of a measure. It follows that $\sscript (\Omega ')=\ascript '$. However, $\mu (A)\ne\nu (A)$ for all $A\in\ascript '$, in general, because we need not have $A=\cap\rmcyl (A^n)$.

We now briefly mention two covariant transition amplitudes that may have physical relevance. The first is complex percolation
$\atilde\colon\pscript '\times\pscript '\to\complex$ \cite{djs10}. As usual $\atilde (x,y)=0$ if $x\not\to y$. Let $p\in\complex$ be arbitrary. If $y=x\uparrow a$ we define $\atilde (x,y)=p ^\pi(1-p)^u$ where $\pi$ is the number of parents of $a$ and $u$ is the number of elements of $x$ that are incomparable to $a$. It is clear that $\atilde$ is covariant. To show that the Markov condition $\sum _y\atilde (x,y)=1$ holds, it is well-known form the classical theory that this condition holds if $0\le p\le 1$. By analytic continuation, the condition still holds for complex $p$.

Our last example is a quantum action dynamics presented in \cite{gud13a}. This dynamics has the form of a discrete Feynman integral. Since this formalism was treated in detail in \cite{gud13a}, we refer the reader to that reference for further consideration.

\section{Compressing a QSGP from $\pscript '$ to $\pscript$} 
This section shows that an arbitrary QSGP on $\pscript '$ can be compressed in a natural way to a QSGP on $\pscript$. The
\textit{compression operator} $S_n\colon H'_n\to H_n$ is the linear operator defined by
\begin{equation*}
S_n\chi _\omega =\chi _{\psi (\omega )}=\chi _{\omegahat}
\end{equation*}
We also define the \textit{covariance operator} $T_n\colon H_n\to H'_n$ as the linear operator given by
\begin{equation*}
T_n\chi _\gamma =\sum\brac{\chi _\omega\in H'_n\colon\omegahat =\gamma}
\end{equation*}

\begin{lem}       
\label{thm51}
The operators $S_n$ and $T_n$ satisfy $T_n=S_n^*$.
\end{lem}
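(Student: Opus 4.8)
The plan is to verify the defining property of the adjoint directly on the natural orthonormal bases. Since $\brac{\chi _\omega \colon \omega\in\Omega '_n}$ is an orthonormal basis of $H'_n=L_2(\Omega '_n)$ and $\brac{\chi _\gamma \colon \gamma\in\Omega _n}$ is an orthonormal basis of $H_n=L_2(\Omega _n)$, and both $S_n$ and $T_n$ are linear, it suffices to check
\begin{equation*}
\elbows{S_n\chi _\omega ,\chi _\gamma}=\elbows{\chi _\omega ,T_n\chi _\gamma}
\end{equation*}
for every $\omega\in\Omega '_n$ and every $\gamma\in\Omega _n$; this identity on basis pairs, together with linearity (and the fact that $\chi _\omega$, $\chi _\gamma$ are real so conjugation is harmless), gives $\elbows{S_nf,g}=\elbows{f,T_ng}$ for all $f\in H'_n$, $g\in H_n$, which is precisely $T_n=S_n^*$.

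For the left-hand side, $S_n\chi _\omega =\chi _{\omegahat}$ by definition, so $\elbows{S_n\chi _\omega ,\chi _\gamma}=\elbows{\chi _{\omegahat},\chi _\gamma}$, which is $1$ when $\omegahat =\gamma$ and $0$ otherwise. For the right-hand side, $T_n\chi _\gamma =\sum\brac{\chi _{\omega '}\in H'_n\colon\omegahat '=\gamma}$, so
\begin{equation*}
\elbows{\chi _\omega ,T_n\chi _\gamma}=\sum\brac{\elbows{\chi _\omega ,\chi _{\omega '}}\colon\omega '\in\Omega '_n,\ \omegahat '=\gamma}
\end{equation*}
Since the $\chi _{\omega '}$ are orthonormal, exactly one term can be nonzero, and it equals $1$ precisely when $\omega$ itself lies in the index set, i.e.\ when $\omegahat =\gamma$; otherwise the sum is $0$. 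Comparing, both sides equal $1$ when $\omegahat =\gamma$ and $0$ otherwise, so they agree.

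The argument is essentially bookkeeping, and the only point deserving a word of care is that $T_n$ is genuinely a well-defined operator: the defining sum for $T_n\chi _\gamma$ ranges over the finite set $\psi ^{-1}(\gamma )\subseteq\Omega '_n$, which is nonempty because $\psi$ is surjective, so $T_n\chi _\gamma$ is a bona fide element of $H'_n$ and no convergence issue arises. (By Lemma~\ref{lem35} this index set is exactly the family of $\omega '$ with $\omega '_j\sim\omega _j$ for all $j$, which makes the description of $T_n$ transparent, but for the adjoint computation only membership of $\omega$ in it is used.) I do not anticipate a genuine obstacle here; the computation is short once one commits to testing the adjoint relation on characteristic functions rather than trying to manipulate $S_n$ and $T_n$ abstractly.
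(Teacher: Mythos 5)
Your proof is correct and follows the same route as the paper: testing the adjoint identity on the characteristic-function bases and observing that both $\elbows{S_n\chi _\omega ,\chi _\gamma}$ and $\elbows{\chi _\omega ,T_n\chi _\gamma}$ equal $\delta _{\omegahat ,\gamma}$. The extra remarks on well-definedness and finiteness are fine but not needed beyond what the paper's argument already contains.
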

\begin{proof}
For $\omega\in\Omega '_n$ and $\gamma\in\Omega _n$ we have that
\begin{equation*}
\elbows{S_n\chi _\omega ,\chi _\gamma}=\elbows{\chi _{\omegahat},\chi _\gamma}=\delta _{\omegahat ,\gamma}
\end{equation*}
Moreover,
\begin{equation*}
\elbows{\chi _\omega ,T_n\chi _\gamma}=\elbows{\chi _\omega ,\sum\brac{\chi _\omega\in H'_n\colon\omegahat =\gamma}}
  =\delta _{\omegahat ,\gamma}
\end{equation*}
The result now follows.
\end{proof}

The next theorem is the main result of this section.

\begin{thm}       
\label{thm52}
If $\brac{\rho _n}$ is a QSGP for $\pscript '$, then $\rhohat _n=S_n\rho _nS_n^*$ is a QSGP for $\pscript$.
\end{thm}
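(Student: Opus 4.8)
The plan is to verify two things: that each $\rhohat_n = S_n\rho_n S_n^*$ is a probability operator on $H_n$, and that the sequence $\{\rhohat_n\}$ is consistent, i.e.\ $D_{\rhohat_{n+1}}(A\to,B\to) = D_{\rhohat_n}(A,B)$ for all $A,B\subseteq\Omega_n$. Positivity of $\rhohat_n$ is immediate since $\rho_n\geq 0$ implies $S_n\rho_n S_n^*\geq 0$. For the normalization $\elbows{\rhohat_n 1_n,1_n}=1$, I would compute $\elbows{S_n\rho_n S_n^* 1_n,1_n} = \elbows{\rho_n S_n^* 1_n, S_n^* 1_n}$ and observe that $S_n^* 1_n = T_n 1_n$; writing $1_n=\sum_{\gamma\in\Omega_n}\chi_\gamma$ and using the definition of $T_n$, the vectors $\chi_\omega$ for $\omegahat=\gamma$ range without repetition over all of $\Omega'_n$ as $\gamma$ ranges over $\Omega_n$ (since $\psi$ is surjective and each $\omega$ has exactly one image), so $S_n^*1_n = 1'_n$, the all-ones vector in $H'_n$. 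Hence $\elbows{\rhohat_n 1_n,1_n}=\elbows{\rho_n 1'_n,1'_n}=1$ because $\rho_n$ is a probability operator on $H'_n$.

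The heart of the proof is consistency, and the key bookkeeping fact is that $S_n^*$ and the one-step continuation map interact correctly through $\psi$. First I would record, for $A\subseteq\Omega_n$, an explicit description of $S_n^*\chi_A = T_n\chi_A = \sum\{\chi_\omega : \omegahat\in A\} = \chi_{\psi^{-1}(A)}$. Then I would show the crucial compatibility relation
\begin{equation*}
\psi^{-1}(A\to) = \paren{\psi^{-1}(A)\to},
\end{equation*}
i.e.\ an $(n+1)$-path $\omega_1\cdots\omega_{n+1}\in\Omega'_{n+1}$ projects under $\psi$ into $A\to$ if and only if $\omega_1\cdots\omega_n\in\psi^{-1}(A)$ and $\omega_n\to\omega_{n+1}$. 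The inclusion $\supseteq$ is clear from $x\to y \Rightarrow \xhat\to\yhat$. For $\subseteq$, if $\omegahat_1\cdots\omegahat_{n+1}\in A\to$ then $\omegahat_1\cdots\omegahat_n\in A$, so $\omega_1\cdots\omega_n\in\psi^{-1}(A)$, and since every $\omega\in\Omega'_{n+1}$ arises from its truncation by adjoining a maximal vertex labeled $n+1$, we automatically have $\omega_n\to\omega_{n+1}$; this is exactly the content built into the $\ell$-causet setup (cf.\ Lemma~\ref{lem31} and Lemma~\ref{lem32}). Combining, $S_{n+1}^*\chi_{A\to} = \chi_{\psi^{-1}(A\to)} = \chi_{(\psi^{-1}(A)\to)}$.

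With this in hand the computation is routine: for $A,B\subseteq\Omega_n$,
\begin{align*}
D_{\rhohat_{n+1}}(A\to,B\to)
  &= \elbows{S_{n+1}\rho_{n+1}S_{n+1}^*\chi_{B\to},\chi_{A\to}}
   = \elbows{\rho_{n+1}S_{n+1}^*\chi_{B\to},S_{n+1}^*\chi_{A\to}}\\
  &= \elbows{\rho_{n+1}\chi_{(\psi^{-1}(B)\to)},\chi_{(\psi^{-1}(A)\to)}}
   = D_{\rho_{n+1}}\paren{\psi^{-1}(A)\to,\psi^{-1}(B)\to}\\
  &= D_{\rho_n}\paren{\psi^{-1}(A),\psi^{-1}(B)}
   = \elbows{\rho_n S_n^*\chi_B,S_n^*\chi_A}
   = D_{\rhohat_n}(A,B),
\end{align*}
where the second-to-last equality uses consistency of $\{\rho_n\}$ on $\pscript'$. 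I expect the main obstacle to be nothing computational but rather pinning down the set-theoretic identity $\psi^{-1}(A\to)=(\psi^{-1}(A)\to)$ cleanly — in particular making sure the labeling convention "$y=x\uparrow a$ always labels $a$ with $\ab{x}+1$" is invoked correctly so that truncation of an $\ell$-path is again an $\ell$-path and no continuations are lost or spuriously added. Once that identity and $S_n^*1_n = 1'_n$ are established, the theorem follows.
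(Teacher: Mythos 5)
Your proof is correct and follows essentially the same route as the paper: positivity by conjugation, normalization via $S_n^*1_n=T_n1_n$ being the all-ones vector on $\Omega'_n$, and consistency by transferring to $\{\rho_n\}$ through $T_n$. Your explicit verification of $\psi^{-1}(A\to)=\paren{\psi^{-1}(A)\to}$ simply spells out for general $A,B\subseteq\Omega_n$ the step the paper handles on elementary cylinder sets and then extends by bilinearity, so the two arguments coincide in substance.
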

\begin{proof}
We have that $\rhohat _n$ is positive because
\begin{equation*}
\elbows{\rhohat _nf,f}=\elbows{S_n\rho _nS_n^*f,f}=\elbows{\rho _nS_n^*f,S_n^*f}\ge 0
\end{equation*}
for every $f\in H_n$. The normalization condition follows from
\begin{equation*}
\elbows{\rhohat _n1_n,1_n}=\elbows{\rho _nS_n^*1_n,S_n^*1_n}=\elbows{\rho _nT_n1_n,T_n1_n}=\elbows{\rho _n1_n,1_n}=1
\end{equation*}
Consistency follows from
\begin{align*}
\elbows{\rhohat _{n+1}\chi _{(\omega\to)},\chi _{(\omega '\to)}}
  &=\elbows{\rho _{n+1}T_{n+1}\chi _{(\omega\to )},T_{n+1}\chi _{(\omega '\to )}}\\
  &==\elbows{\rho _nT_n\chi _\omega ,T_n\chi _{\omega '}}=\elbows{\rhohat _n\chi _\omega ,\chi _{\omega '}}
\end{align*}
because
\begin{equation*}
\elbows{\rhohat _{n+1}\chi _{A\to},\chi _{B\to}}=\elbows{\rhohat _n\chi _A,\chi _B}
\end{equation*}
results from bilinearity.
\end{proof}

The compression procedure is not reversible in the sense that if $\rho _n$ is a QSGP for $\pscript$, then $\rho '_n=S_n^*\rho _nS_n$ need not be a QSGP for $\pscript '$. For example, in general
\begin{equation*}
\elbows{\rho '_n1_n,1_n}=\elbows{\rho _nS_n1_n,S_n1_n}\ne 1
\end{equation*}
We now show that the compression operators grow according to multiplicity. First $S_n^*S_n\colon H'_n\to H'_n$ satisfies
\begin{equation*}
S_n^*S_n\chi _\omega =T_n\chi _{\omegahat}=\sum\brac{\chi _{\omega '}\in H'_n\colon\omegahat '=\omegahat}
\end{equation*}
and $S_nS_n^*\colon H_n\to H_n$ satisfies
\begin{equation*}
S_nS_n^*\chi _\gamma =S_nT_n\chi _\gamma
  =S_n\sum\brac{\chi _\omega\in H'_n\colon\omegahat =\gamma}=m(\gamma )\chi _\gamma
\end{equation*}
We thus see that $S_nS_n^*$ is diagonal with eigenvalues $\brac{m(\gamma )\colon\gamma\in\Omega _n}$. Hence,
\begin{equation*}
\|S_nS_n^*\|=\max\brac{m(\gamma )\colon\gamma\in\Omega _n}
\end{equation*}
By the $C^*$-identity, we have
\begin{equation*}
\|T_n\|=\|S_n\|=\sqbrac{\|S_nS_n^*\|}^{1/2}=\sqbrac{\max\brac{m(\gamma )\colon\gamma\in\Omega _n}}^{1/2}
\end{equation*}

Suppose we have an AP $\brac{a_n}$ on $\pscript '$. We have seen that $\rho _n=\ket{a_n}\bra{a_n}$ is a QSGP for $\pscript '$. The corresponding QSGP $\rhohat _n$ for $\pscript$ becomes
\begin{equation*}
\rhohat _n=S_n\ket{a_n}\bra{a_n}S_n^*
\end{equation*}
Hence,
\begin{align*}
\rhohat _n\chi _\gamma&=S_n\ket{a_n}\bra{a_n}T_n\chi _\gamma =S_n\ket{a_n}\elbows{a_n,T_n\chi _\gamma}\\
  &=S_n\ket{a_n}\elbows{a_n,\sum\brac{\chi _\omega\in H'_n\colon\omegahat =\gamma}}\\
  &=S_n\ket{a_n}\sum\brac{\elbows{a_n,\chi _\omega}\colon\omegahat =\gamma}\\
  &=S_n\ket{a_n}\sum\brac{\abar _n(\omega )\colon\omegahat =\gamma}
\end{align*}
We conclude that
\begin{align*}
D_n(\gamma ,\gamma ')&=\elbows{\rhohat _n\chi _{\gamma '},\chi _\gamma}
  =\elbows{S_n\mid a_n}\elbows{a_n\mid T_n\chi _{\gamma '},\chi _\gamma}\\
  &=\overline{\elbows{a_n,T_n\chi _{\gamma '}}}\elbows{a_n,T_n\chi _\gamma}\\
  &=\sum\brac{\overline{a_n(\omega )}\colon\omegahat =\gamma}\sum\brac{a_n(\omega ')\colon\omegahat '=\gamma '}
\end{align*}
It follows that
\begin{equation*}
\mu _n(A)=\ab{\sum _{\gamma\in A}\brac{a_n(\omega )\colon\omegahat =\gamma}}^2
\end{equation*}
and in particular,
\begin{equation*}
\mu _n(\gamma )=\ab{\sum\brac{a_n(\omega )\colon\omegahat =\gamma}}^2
\end{equation*}

We say that an AP $\brac{a_n}$ on $\pscript '$ is \textit{covariant} if $a_n(\omega )$ is independent of the labeling of $\omega$; that is, $a_n(\omega )=a_n(\omega ')$ whenever $\omega\sim\omega '(\omegahat =\omegahat ')$. Of course, for a covariant transition amplitude $\atilde$, the corresponding AP $\brac{a_n}$ is covariant. If the AP $\brac{a_n}$ on $\pscript '$ is covariant, then
\begin{equation*}
D_n(\gamma ,\gamma')=m(\omega )m(\omega ')\abar _n(\omega )a_n(\omega ')
\end{equation*}
where $\omegahat =\gamma$, $\omegahat '=\gamma '$. We also have
\begin{equation*}
\mu _n(A)=\ab{\sum _{\gamma\in A}\brac{m(\omega )a_n(\omega )\colon\omegahat =\gamma}}^2
\end{equation*}
and in particular,
\begin{equation*}
\mu _n(\gamma )=\ab{m(\omega )}^2\ab{a_n(\omega )}^2
\end{equation*}
where $\omegahat =\gamma$. These last three equations exhibit the ``explosion'' in values resulting from multiplicity most clearly.

We can illustrate the situation directly for covariant APs as follows. Let $\atilde\colon\pscript '\times\pscript '\to\complex$ be a covariant transitional amplitude. For $\gamma =\gamma _1\gamma _2\cdots\gamma _n\in\Omega _n$, define
\begin{equation*}
\ahat _n(\gamma )=m(\omega _1\to\omega _2)\atilde (\omega _1,\omega _2)
   \cdots m(\omega _{n-1}\to\omega _n)\atilde (\omega _{n-1},\omega _n)
\end{equation*}
where $\omegahat _i=\gamma _i$, $i=1,\ldots ,n$. Since $\atilde$ is covariant, $\ahat _n(\gamma )$ is well-defined and letting
$\omega =\omega _1\omega _2\cdots\omega _n\in\Omega '_n$ where $\omegahat _i=\gamma _i$, $i=1,\ldots ,n$ we have
$\omegahat =\gamma$ and
\begin{equation*}
\ahat _n(\gamma )=m(\omega )a_n(\omega )
\end{equation*}
Our previous results now follow.

\section{Metrics} 
We have seen in Section~3 that any $x\in\pscript '_n$ is uniquely determined by its succession sequence
\begin{equation*}
s(x)=(s_0,s_1,\ldots ,s_n)
\end{equation*}
If $y\in\pscript '_n$ has succession sequence
\begin{equation*}
s(y)=(t_0,t_1,\ldots ,t_n)
\end{equation*}
we write $x\prec y$ is $s(x)$ precedes  $s(y)$ lexicographically:
\begin{equation*}
s_0=t_0, s_1=t_1,\ldots ,s_j=t_j,s_{j+1}<t_{j+1}
\end{equation*}
for some $j$ with $j\le n-1$. Then $\prec$ is a total order on $\pscript '_n$. We can then well-order the elements of $\pscript '_n$ as
\begin{equation*}
x_1\prec x_2\prec\cdots\prec x_m
\end{equation*}
where $m=\ab{\pscript '_n}$. We now define $\rho (x_i,x_j)=\ab{i-j}$ for all $x_i,x_j\in\pscript '_n$. It is clear that $\rho (x_i,x_j)=0$ if and only if $x_i=x_j$ and that $\rho (x_i,x_j)=\rho (x_j,x_i)$ for all $x_i,x_j\in\pscript '_n$. Moreover, by the triangle inequality for real numbers we have
\begin{equation*}
\rho(x_i,x_j)=\ab{i-j}\le\ab{i-k}+\ab{k-j}=\rho (x_i,x_k)+\rho (x_k,x_j)
\end{equation*}
for all $x_k\in\pscript '_n$. Thus, $\rho$ is a metric on $\pscript '_n$. For $x,y\in\pscript _n$ define
\begin{equation*}
\rho (x,y)=\max\brac{\rho (x',y')\colon\xhat '=x,\yhat '=y}
\end{equation*}
if $x\ne y$ and define $\rho (x,y)=0$, otherwise.

\begin{thm}       
\label{thm61}
The map $\rho\colon\pscript _n\times\pscript _n\to\real$ is a metric on $\pscript _n$.
\end{thm}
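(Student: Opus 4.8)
The plan is to verify the three metric axioms for $\rho$ on $\pscript_n$, using the fact (already established) that $\rho$ restricted to $\pscript'_n$ is a genuine metric. The definiteness axiom is immediate: by construction $\rho(x,y)=0$ when $x=y$, and when $x\ne y$ any representatives $x',y'$ with $\xhat'=x$, $\yhat'=y$ satisfy $x'\ne y'$, so $\rho(x',y')\ge 1$ and hence $\rho(x,y)=\max\brac{\rho(x',y')\colon\xhat'=x,\yhat'=y}\ge 1>0$. Symmetry is equally direct, since $\rho(x',y')=\rho(y',x')$ on $\pscript'_n$ and the index set of the maximum is symmetric in $x$ and $y$.

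The real content is the triangle inequality: for $x,y,z\in\pscript_n$ we must show $\rho(x,z)\le\rho(x,y)+\rho(y,z)$. If any two of $x,y,z$ coincide this reduces to the already-verified properties, so assume they are pairwise distinct. Choose representatives $x',z'\in\pscript'_n$ with $\xhat'=x$, $\zhat'=z$ that attain the maximum, so $\rho(x,z)=\rho(x',z')$. Now pick \emph{any} representative $y'\in\pscript'_n$ with $\yhat'=y$. Applying the triangle inequality in $\pscript'_n$ gives $\rho(x',z')\le\rho(x',y')+\rho(y',z')$. The point is then that $\rho(x',y')\le\rho(x,y)$ and $\rho(y',z')\le\rho(y,z)$, since $x',y'$ are particular representatives of $x,y$ and $y',z'$ of $y,z$, and $\rho(x,y)$, $\rho(y,z)$ are defined as maxima over all such representatives. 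Combining, $\rho(x,z)=\rho(x',z')\le\rho(x,y)+\rho(y,z)$, as required.

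I expect the only subtlety worth flagging is the direction of the bound in the triangle step: because $\rho$ on $\pscript_n$ is defined as a \emph{maximum} over representatives, one must be careful to fix the maximizing pair for the left-hand side $\rho(x,z)$ \emph{first} and then bound the pieces from above by the respective maxima — one cannot instead choose maximizing pairs for $\rho(x,y)$ and $\rho(y,z)$ independently, since those need not share a common middle representative $y'$. Once this ordering of choices is respected, the argument is routine. (One may also note that a well-defined maximizing pair exists because $\pscript'_n$ is finite, so each fiber of $\,\widehat{\ }\,$ is a nonempty finite set.)
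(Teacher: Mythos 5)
Your proof is correct and follows essentially the same route as the paper: fix a maximizing pair of labeled representatives for the left-hand side of the triangle inequality, insert an arbitrary representative of the middle causet, apply the triangle inequality in $\pscript '_n$, and bound the two resulting terms by the defining maxima. Your explicit remark about the order of choices (and the finiteness/nonemptiness of the fibers) is a sound clarification of the same argument, not a different method.
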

\begin{proof}
Clearly, $\rho (x,y)\ge 0$ and $\rho (x,y)=\rho (y,x)$ for all $x,y\in\pscript _n$. Also, $\rho (x,x)=0$. Suppose $x,y\in\pscript _n$ with
$x\ne y$. If $\xhat '=x$, $\yhat '=y$, then $x'\ne y'$ so $\rho (x',y')>0$. It follows that $\rho (x,y)>0$. For $x,y,z\in\pscript _n$, if $x=y$ we have that
\begin{equation*}
\rho (x,y)=0\le\rho (x,z)+\rho (z,y)
\end{equation*}
so suppose that $x\ne y$. If $z=x$, then
\begin{equation*}
\rho (x,z)+\rho (z,y)=\rho (x,y)
\end{equation*}
and a similar result holds if $z=y$. Finally, suppose that $x,y,z$ are all distinct. Now there exists $x',y',z'\in\pscript '_n$ with
$\xhat '=x$, $\yhat '=y$, $\zhat '=z$ and $\rho (x,y)=\rho (x',y')$. Since $\rho$ is a metric on $\pscript '_n$ we have that
\begin{equation*}
\rho (x,y)=\rho (x',y')\le\rho (x',z')+\rho (z',y')\le\rho (x,z)+\rho (z,y)
\end{equation*}
Hence, the triangle equality holds so $\rho$ is a metric on $\pscript _n$.
\end{proof}

We now define metrics on $\Omega '_n$ and $\Omega _n$. For $\omega ,\omega '\in\Omega '_n$ with
$\omega =\omega _2\omega _2\cdots\omega _n$, $\omega '=\omega '_1\omega '_2\cdots\omega '_n$ define
\begin{equation}         
\label{eq61}
\rho (\omega ,\omega ')=\max\brac{\rho (\omega _i,\omega '_i)\colon i=1,2,\ldots ,n}
\end{equation}

\begin{thm}       
\label{thm62}
The map $\rho\colon\Omega '_n\times\Omega ' _n\to\real$ is a metric.
\end{thm}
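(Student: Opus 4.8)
The plan is to verify the four metric axioms for $\rho$ on $\Omega '_n$ directly, reducing each one to the corresponding property of the component metrics $\rho$ on $\pscript '_1,\pscript '_2,\ldots ,\pscript '_n$ that were constructed just before Theorem~\ref{thm61}. The underlying principle is simply that a maximum of finitely many metrics --- here, one metric on each ``coordinate'' $\pscript '_i$ --- is again a metric. Note first that if $\omega =\omega _1\omega _2\cdots\omega _n$ and $\omega '=\omega '_1\omega '_2\cdots\omega '_n$ lie in $\Omega '_n$, then $\omega _i,\omega '_i\in\pscript '_i$ for each $i$, so each term $\rho (\omega _i,\omega '_i)$ in \eqref{eq61} is a well-defined nonnegative real number and the maximum over the finite index set $\brac{1,\ldots ,n}$ exists.

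Nonnegativity and symmetry are immediate: $\rho (\omega ,\omega ')\ge 0$ because it is a maximum of nonnegative numbers, and $\rho (\omega ,\omega ')=\rho (\omega ',\omega )$ because $\rho (\omega _i,\omega '_i)=\rho (\omega '_i,\omega _i)$ for every $i$. For the identity of indiscernibles, $\rho (\omega ,\omega ')=0$ forces $\rho (\omega _i,\omega '_i)=0$ for every $i$, since a maximum of nonnegative reals vanishes only if each entry does; as $\rho$ is a metric on $\pscript '_i$, this gives $\omega _i=\omega '_i$ for all $i$, hence $\omega =\omega '$. The converse $\rho (\omega ,\omega )=0$ is clear.

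The triangle inequality is the only step meriting a line of argument. Let $\omega ,\omega ',\omega ''\in\Omega '_n$ and choose an index $k$ attaining the maximum in $\rho (\omega ,\omega ')$. Using that $\rho$ is a metric on $\pscript '_k$,
\[
\rho (\omega ,\omega ')=\rho (\omega _k,\omega '_k)\le\rho (\omega _k,\omega ''_k)+\rho (\omega ''_k,\omega '_k)\le\rho (\omega ,\omega '')+\rho (\omega '',\omega '),
\]
the last inequality holding because each summand is bounded by the maximum defining $\rho (\omega ,\omega '')$, respectively $\rho (\omega '',\omega ')$. This establishes all four axioms. I do not expect any genuine obstacle: the whole argument rests on the fact, already established in Section~6, that $\rho$ restricted to each $\pscript '_i$ is a metric, together with the elementary monotonicity of the maximum. (The same template gives, mutatis mutandis, the analogous statement for $\Omega _n$ using the metric $\rho$ on $\pscript _n$ supplied by Theorem~\ref{thm61}.)
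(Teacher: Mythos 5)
Your proposal is correct and follows essentially the same route as the paper: the first three axioms are immediate, and the triangle inequality is obtained by choosing an index $k$ attaining the maximum, applying the triangle inequality for the metric on $\pscript '_k$, and bounding each summand by the corresponding maximum. No gaps.
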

\begin{proof}
Clearly, $\rho (\omega ,\omega ')\ge 0$, $\rho (\omega ,\omega ')=\rho (\omega ',\omega )$ and $\rho (\omega ,\omega ')$=0 if and only if $\omega =\omega '$. If $\omega ''=\omega ''_i\omega ''_2\cdots\omega ''_n\in\Omega '_n$ we have that $\rho (\omega ,\omega ')=\rho (\omega _j,\omega '_j)$ for some $j\in\brac{1,\ldots ,n}$ and that
\begin{equation*}
\rho (\omega ,\omega ')=\rho (\omega _j,\omega '_j)
  \le\rho (\omega _j,\omega ''_j)+\rho (\omega ''_j,\omega '_j)\le\rho (\omega ,\omega '')+\rho (\omega '',\omega ')\qedhere
\end{equation*}
\end{proof}

We can define a metric on $\Omega _n$ in a similar way using \eqref{eq61}. Other metrics on $\Omega '_n$ and $\Omega _n$ that might be convenient are
\begin{align*}
\rho _1(\omega ,\omega ')&=\sum _{i=1}^n\rho (\omega _i,\omega '_i)\\
\rho _2(\omega ,\omega ')&=\sqbrac{\sum _{i=1}^n\rho (\omega _i,\omega '_i)^2}^{1/2}
\end{align*}


\begin{thebibliography}{99}
\bibitem{djs10}F.~Dowker, S.~Johnston and S.~Surya, On extending the quantum measure, arXiv: gr-qc 1204.4596 (2010).
\bibitem{gud11}S.~Gudder, Discrete quantum gravity, arXiv: gr-qc 1108.2296 (2011).
\bibitem{gud13a}S.~Gudder, A dynamics for discrete quantum gravity, arXiv: gr-qc 1303.0433 (2013).
\bibitem{gud13b}S.~Gudder, An approach to discrete quantum gravity, arXiv: gr-qc 1305.5184 (2013).
\bibitem{hen09}J.~Henson, Quantum histories and quantum gravity, arXiv: gr-qc 0901.4009 (2009).
\bibitem{rs00}D.~Rideout and R.~Sorkin, A classical sequential growth dynamics for causal sets, \textit{Phys. Rev.} \textbf{D61} (2000),
024002.
\bibitem{sor94}R.~Sorkin, Quantum mechanics as quantum measure theory, \textit{Mod. Phys. Letts.} A9 (1994), 3119--3127.
\bibitem{sor03}R.~Sorkin, Causal sets: discrete gravity, arXiv: gr-qc 0309009 (2003).
\bibitem{sur11}S.~Surya, Directions in causal set quantum gravity, arXiv: gr-qc 1103.6272 (2011).

\end{thebibliography}
\end{document}